\documentclass[sn-basic]{sn-jnl}

\usepackage{graphicx}%
\usepackage{multirow}%
\usepackage{amsmath,amssymb,amsfonts}%
\usepackage{amsthm}%
\usepackage{mathrsfs}%
\usepackage[title]{appendix}%
\usepackage{xcolor}%
\usepackage{textcomp}%
\usepackage{manyfoot}%
\usepackage{booktabs}%
\usepackage{algorithm}%
\usepackage{algorithmicx}%
\usepackage{algpseudocode}%
\usepackage{listings}%

\usepackage{tikz}
\usetikzlibrary{arrows}
\usetikzlibrary{plothandlers}
\usetikzlibrary{decorations,decorations.pathmorphing,decorations.markings}

\theoremstyle{thmstyleone}%
\newtheorem{theorem}{Theorem}
\newtheorem{proposition}[theorem]{Proposition}
\newtheorem{lemma}[theorem]{Lemma}%

\theoremstyle{thmstyletwo}%
\newtheorem{example}{Example}%

\theoremstyle{thmstylethree}%
\newtheorem{definition}{Definition}%

\begin{document}
\renewcommand{\a}{\alpha}
\renewcommand{\b}{\beta}
\newcommand{\g}{\gamma}
\renewcommand{\d}{\delta}
\newcommand{\e}{\varepsilon}
\newcommand\z{\zeta}
\renewcommand{\th}{\vartheta}
\renewcommand{\k}{\kappa}
\newcommand{\n}{\nu}
\renewcommand{\l}{\lambda}
\newcommand{\no}{\omega}
\def\oo{\omega}
\renewcommand{\r}{\rho}
\newcommand{\s}{\sigma} 
\newcommand{\ph}{\varphi}
\renewcommand{\t}{\tau}
\newcommand{\Oo}{\Omega}
\newcommand{\Om}{\Omega}
\def\QP{{\bf QP}}

\newcommand\mutilde{\widetilde\mu}
\def\mtilde{\widetilde m}
\newcommand\nutilde{\widetilde\nu}
\newcommand\Mtilde{\widetilde M}
\newcommand\Ntilde{\widetilde N}
\def\Mctilde{\widetilde{\cal M}}
\def\Deltatilde{\widetilde{\Delta}}
\def\Gammatilde{\widetilde{\Gamma}}
\def\Sigmatilde{\widetilde{\Sigma}}
\def\Deltahat{\widehat\Delta}
\def\Mhat{\widehat M}
\def\Mchat{\widehat{\cal M}}
\def\Nhat{\widehat N}
\def\Sigmahat{\widehat\Sigma}
\def\Gammahat{\widehat\Gamma}
\def\muhat{\widehat\mu}
\def\nuhat{\widehat\nu}
\def\Rbar{{\overline\RR}}
\def\toexp{\mathop{\exp}^{\longrightarrow}}
\def\Rp{{\RR\setminus\{0\}}}
\def\intRn{\int_{\RR\setminus\{0\}}}
\def\Qphi{Q_\ph}
\def\eiox{e^{i\oo x}}
\def\ant#1#2{\{#1[\one],#2\}}
\def\Pfi#1{(#1-\ph(#1)\cdot\one)}
\def\ket#1{|#1\rangle}
\def\bra#1{\langle#1|}
\def\ketbra#1{|#1\rangle\langle#1|}
\def\supp{{\rm supp}}

\def\A{{\cal A}}
\def\B{{\cal B}}
\def\C{{\cal C}}
\def\D{{\cal D}}
\def\E{{\cal E}}
\def\F{{\cal F}}
\def\G{{\cal G}}
\def\H{{\cal H}}
\def\I{{\cal I}}
\def\K{{\cal K}}
\def\L{{\cal L}}
\def\M{{\cal M}}
\def\S{{\cal S}}

\def\Cbar{\overline{\cal C}}
\def\Cb{C_b}
\def\Lj{L_{\rm jump}}
\def\Ltilde{\tilde{L}}
\def\CP{{\rm CP}}
\def\norm#1{\|#1\|}
\def\Norm#1{\left\|#1\right\|}
\def\norms#1{\|#1\|^2}
\def\Norms#1{\left\|#1\right\|^2}
\def\normfour#1{\|#1\|_4}
\def\trinorm#1{|||#1|||}

\def\sod#1{\sum_{#1=1}^d}
\def\sok#1{\sum_{#1=1}^k}
\def\sokn#1{\sum_{#1=1}^{k_n}}
\def\sol#1{\sum_{#1=1}^l}
\def\som#1{\sum_{#1=1}^m}
\def\son#1{\sum_{#1=1}^n}
\def\sn#1{\sum_{#1=0}^{n-1}}
\def\sm#1{\sum_{#1=0}^{m-1}}
\def\szk#1{\sum_{#1=0}^k}
\def\szi#1{\sum_{#1=0}^\infty}
\def\soi#1{\sum_{#1=1}^\infty}
\def\li#1{\lim_{#1\to\infty}}
\def\lid#1{\lim_{#1\downarrow0}}
\def\szn#1{\sum_{#1=0}^n}
\def\intR{\int_{-\infty}^\infty}
\def\intRp{\int_{\RR\setminus\{0\}}}
\def\wlim{\qopname\relax m{w{-}lim}}
\def\ddt{\frac d {dt}}
\def\Mdop{\A^{\rm op}}
\def\dth{{\textstyle\frac 1 d}}
\def\Ld{L^{\textnormal{diff}}}
\def\Lj{L^{\textnormal{jump}}}
\def\Lz{L^{\textnormal{zero}}}
\def\half{{\textstyle{\frac 12}}}
\def\third{{\textstyle{\frac 13}}}
\def\twothirds{{\textstyle{\frac 23}}}
\def\quarter{{\textstyle{\frac 14}}}

\def\Dom{{\rm Dom}}
\def\id{{\rm Id}}
\def\Ad{{\rm Ad}}
\def\Re{\mathop{\rm{Re}}}
\def\Im{\mathop{\rm{Im}}}
\def\tr{{\mathrm{tr}}}

\def\EE{{\mathbb E}}
\def\NN{{\mathbb N}}
\def\ZZ{{\mathbb Z}}
\def\RR{{\mathbb R}}
\def\CC{{\mathbb C}}
\def\tuple#1#2{#1_1,#1_2,\ldots,#1_{#2}}
\def\one{{\mathchoice {\rm 1\mskip-4mu l} {\rm 1\mskip-4mu l}{\rm 1\mskip-4.5mu l} {\rm 1\mskip-5mu l}}}
\def\to{\rightarrow}
\def\tto{\longrightarrow}
\def\ten{\otimes}
\def\inp#1#2{\langle#1,#2\rangle}
\def\Inp#1#2{\left\langle#1,#2\right\rangle}
\def\implies{\Longrightarrow}
\def\Implies{\quad\Longrightarrow\quad}  
\def\and{\quad\hbox{and}\quad} 
\def\an#1{\quad\hbox{and #1}\quad} 
\def\twovector#1#2{\begin{pmatrix}#1\\#2\end{pmatrix}}
\def\twomatrix#1#2#3#4{\begin{pmatrix}#1&#2\\#3&#4\end{pmatrix}}
\def\twosvector#1#2{\left(\begin{smallmatrix}#1\\#2\end{smallmatrix}\right)}
\def\twosmatrix#1#2#3#4{\left(\begin{smallmatrix}#1&#2\\#3&#4\end{smallmatrix}\right)}
\def\set#1#2{\{#1|\;#2\;\}}
\def\Set#1#2{\bigg\{#1\bigg|\;#2\;\bigg\}}

\title[Article Title]{Copying Quantum States}


\author[1]{\fnm{Hans} \sur{Maassen}}\email{maassen@math.ru.nl}
\author[2]{\fnm{Burkhard} \sur{K\"ummerer}}\email{kuemmerer@mathematik.tu-darmstadt.de}


\affil[1]{\orgdiv{Mathematisch Instituut}, \orgname{Radboud Universiteit\par} \orgaddress{\street{Heyendaalseweg 135}, \city{Nijmegen}, \postcode{6525 AJ}, \country{Netherlands}}}

\affil[2]{\orgdiv{Fachbereich Mathematik}, \orgname{Technische Universit\"at Darmstadt}, \orgaddress{\street{Schlossgartenstraße 7}, \city{Darmstadt}, \postcode{64289}, \country{Germany}}}



\abstract{The no-broadcasting theorem in quantum information says that a set of states on a quantum system admits a common broadcasting (copying) operation
if and only if their density matrices belong to a commuting family.
We discuss and prove this theorem, as well as the closely related ``no-cloning theorem'' in the context of quantum probability theory,
i.e. in the category of (finite dimensional) C*-algebras with unital completely positive maps.}

\keywords{quantum probability, quantum information, no-cloning, matrix algebras }



\maketitle

\section{Physical Introduction}\label{Intro}

This issue is devoted to the memory of K.R. Parthasarathy.
Apart from his important contributions to analysis and probability theory,
he was a figurehead in the development of quantum probability theory and quantum stochastic calculus, e.g.:
\cite{HudsonPartha}, \cite{ParthaBook}, \cite{ParthaUncRel}, \cite{ParthaStates}.

\noindent
Quantum probability could be described as the study of C*-algebras and their completely positive maps,
to be used as a tool to extend probability theory to quantum mechanics \citep{BOOK}, 
\citep{KuemmererThesis}.
In this paper we apply this tool to discuss a fundamental theme of quantum information theory:
the (im)possibility of copying quantum states.

\noindent
The well-known ``no-cloning principle'',
which states that an unknown (pure) quantum state can not be copied, delineates quantum systems from the systems studied in classical information theory,
where unlimited copying is assumed possible.

\noindent
A `pure state cloner'  or `quantum amplifier'
is an imaginary device that takes as input a quantum system
in a state given by the vector $\psi$ in some Hilbert space $\H$,
and outputs $n\ge2$ identical quantum systems in the joint state $\psi^{\ten n}$.
This possibility was suggested by \cite{Herbert}, as a means to achieve superluminal communication.
It was discussed in the `Quantum Club' in Amsterdam by Dieks and Hoekzema, and
the existence of such a device was soon ruled out. The map $\psi\mapsto\psi^{\ten n}$ is not a linear map $\H\to\H^{\ten n}$ \citep{Dieks}.
Independently \cite{WoottersZureck} derived the result, using the word ``cloning'' in this context for the first time.

\noindent
The principle has been refined in several directions.
First it was realized \citep{Yuen} that, in the presence of certain foreknowledge, copying of pure states {\it is} possible: when the state vector is known to belong to a given orthonormal basis,
it can be cloned by  measuring in that basis, and then preparing a pair, say, of independent systems which are both in the pure state measured.
A second refinement is the ``no-broadcasting theorem'', extending the principle to mixed states.
It says that a mixed quantum state can be ``broadcast'' (copied) if and only if its density matrix is known to belong to a given commuting family.
This extension required ingeneous proofs, one by \cite{BarnumEtAl1} et al. using fidelities, and later one by \cite{Lindblad} based on matrix algebras,
which we shall treat here. Later a proof was given, based on convex structure only, by \cite{BarnumEtAl2} et al.

\noindent
In the standard textbook treatment, e.g. by \cite{NielsenChuang},
the no-cloning principle is presented in the orthonormal-set version mentioned above.
Due to a slight oversimplification it is made to look very simple (cf. our remark after Theorem \ref{ThmNoCloning}).
The no-broadcasting version at first sight would seem an easy consequence:
if the density matrices are diagonal in an orthonormal basis, then copying is possible;
otherwise there is a positive probability for the states to be non-orthogonal, so that no-cloning forbids them to be copied.

\noindent
To many researchers it therefore comes as a surprise that the proofs of the no-broadcasting theorem mentioned above
are quite nontrivial.

\noindent
In this paper we hope to shed some light on this matter.
Briefly, it is the ``only if''-direction which is difficult.
It must be shown that no quantum operation, say some ``quantum amplifier'' \citep{Yuen},
exists that can double up a state without having to perform any measurement.
Moreover the argument using probabilities of pure states cannot be used, since the decomposition of a mixed state into pure ones is far from unique.

\noindent
We hope to relieve the tension between the simplicity of ``no-cloning'' and the complexity of ``no-broadcasting'' proofs,
existing in the literature, by giving a fair, more complicated, proof of the first (Theorem \ref{ThmNoCloning}),
and a hopefully somewhat more transparant proof of the second (Theorem \ref{ThmNoBroadcasting}). 

\noindent
We feel that the no-cloning principle is not just a consequence of the linearity of quantum mechanics,
but rather of the non-commuting character of quantum observables, in the context of (completely) positive operations.
In fact, classical systems can also be formulated in a (commutative) linear model, and copying them is well possible.

\noindent
This paper is structured as follows:
In Section \ref{Preliminaries}\ we define matrix algebras, states, and operations,
and formulate basic general results concerning these.

\noindent
In Section \ref{CpDefs} we define the two notions of copying: cloning and broadcasting,
showing that one is a pure state version of the other.

\noindent
In Section \ref{Results} we prove ``no universal copying'' (Theorem \ref{ThmUnivCop}), ``no-broadcasting'' (Theorem \ref{ThmNoBroadcasting}),
and good old ``no-cloning'' (Theorem \ref{ThmNoCloning}).

\section{Mathematical Preliminaries}\label{Preliminaries}
\subsection{Matrix Algebras and States} 
By a {\it matrix algebra} $\A$ we shall mean a subspace of the linear space $M_d$ of all complex $d\times d$-matrices for some $d\in\NN$,
which is closed under matrix multiplication $(a,b)\mapsto ab$ and the taking of adjoints $a\mapsto a^*$,
and which contains a unit element $\one_\A$ (often the unit matrix in $M_d$).

\noindent
A {\it state} $\ph$ on $\A$ is a linear functional $\ph:\A\to\CC$ which maps positive definite matrices to positive numbers,
and satisfies the normalization $\ph(\one_\A)=1$.
The set of all states on $\A$ is denoted by $\S(\A)$, its extremal points forming the set $\E(\A)$ of {\it pure states}.

\begin{example}[\bf Pure Quantum Systems]\label{ExPureSyst}
The matrix algebra $M_d$ has state space
   $$\S(M_d)=\set{\ph:a\mapsto\tr(\rho a)}{\rho\in M_d, \rho\ge0, \tr(\rho)=1}\;.$$
The pure states are given by
   $$\E(M_d)=\Set{\ph:a\mapsto\inp{\psi}{a\psi}=\tr\bigl(\ketbra\psi a\bigr)}{\psi\in\CC^d,\norm\psi=1}\;.$$
\end{example}

\begin{example}[\bf Finite Classical Systems]\label{ClassSyst}
Let $\D_d\subset M_d$ denote the space of diagonal $d\times d$-matrices.
It has state space
   $$\S(\D_d)=\Set{\ph:a\mapsto\sod i \pi_i a_{ii}}{\tuple\pi d\ge0, \sod i \pi_i=1}\;,$$
of which the following are pure:
   $$\E(\D_d)=\{\tuple\d d\}\;,\quad\hbox{where}\quad \d_i(a):=a_{ii}\;.$$
\end{example}
\noindent
We note that $\D_d$ is isomorphic to the algebra $\F=\F(\Omega_d)$ of complex functions on the finite set $\Omega_d:=\{1,\ldots,d\}$,
if we let $f\in\F$ correspond to the diagonal matrix
   $$\begin{pmatrix}f(1)&0&\ldots&0\cr
                     0&f(2)&\ldots&0\cr
                     \vdots&\vdots&\ddots&\vdots\cr
                     0&0&\ldots&f(d)\cr\end{pmatrix}\;.$$
States on $\F(\Omega_d)$ are given by probability distributions $\pi:\Omega_d\to[0,1]$, making $\Omega_d$ a probability space.
By a slight abuse of notation we shall often identify $\D_d$ with $\F(\Omega_d)$.
It is in this sense that the theory of (finite) probability spaces $(\Omega_d,\pi)$ generalizes to that of {\it quantum probability spaces $(\A,\ph)$} \citep{KuemmererGreifswald}.     

\subsection{Completely Positive Maps}\label{QuantOp}
\noindent
A linear map $T$ from a matrix algebra $\A$ to a matrix algebra $\B$ is called {\it positive} if it maps positive definite matrices to positive definite matrices.
It is called {\it completely positive} if for all $n\in\NN$ the map $T\ten\id_{M_n}: \A\ten M_n\to\B\ten M_n$ is positive.
$T$ is called {\it unital} if $T(\one_\A)=\one_\B$.

\noindent
Matrix algebras with unital completely positive maps form a category, to be called $\QP$.

\noindent
By writing $T:\A\to\B$ we shall always mean that $T$ is a completely positive map from $\A$ to $\B$, mapping $\one_\A$ to $\one_\B$.
Such maps describe {\it quantum operations}: actions that can be performed on a quantum system. 
We illustrate the operation by the diagram

\bigskip
\begin{center}
\begin{tikzpicture}
\draw (1,0.3) node {$\B$};
\draw[thick,decorate,decoration={snake}] (0,0) -- (2,0);
\draw[thick] (2,-0.5) rectangle (3,0.5);
\draw (2.5,0) node {$T$};
\draw[thick,decorate,decoration={snake}] (3,0) -- (5,0) ;
\draw (4,0.3) node {$\A$};
\end{tikzpicture}
\end{center}

\noindent
to be read from right to left.
The map $T$ takes observables in $\A$ to observables in $\B$, and is said to be the operation, seen ``in the Heisenberg picture''.
The dual map $T^*:\S(\B)\to\S(\A)$, defined by
   $$T^*\ph(a):=\ph\bigl(T(a)\bigr)\;,$$
will be considered as the operation, seen in the ``Schr\"odinger picture'', and is denoted by the same diagram, now read from left to right.

\subsection{Physical Interpretation}\label{ProbInt}
A matrix algebra $\A\subset M_d$ may be interpreted as the {\it observable algebra} of a system,
represented on the Hilbert space $\H=\CC^d$.
Orthogonal projections $p=p^2=p^*\in\A$ stand for observational statements or ``events'' concerning the system.
If the system is in the state $\ph$, then $\ph(p)\in[0,1]$ is interpreted as the {\it probability} for $p$ to occur in an experiment.

\noindent
In the classical case of Example \ref{ClassSyst}, when $\A$ is commutative, then projections $p$ are of the form $1_A$ for some $A\subset\Omega$.
Then
   $$\ph(p)=\ph(1_A)=\sum_{i\in\Omega}\pi_i 1_A(i)=\sum_{i\in A} \pi_i\;,$$
i.e. the probability of the classical event $A$.

\noindent
A unital completely positive map $T:\A\to\B$, in the Schr\"odinger picture $T^*:\S(\B)\to\S(\A)$,
is interpreted as an action that can be performed on a system with algebra $\B$ in a state $\ph$, which yields a system with algebra $\A$ in the state $T^*\ph$.
Note that the time order is that of the Schr\"odinger picture.

\noindent
In order to prove ``no-go'' theorems such as the ``no-cloning'' and the ``no-broadcasting'' theorems it is necessary to have universal restrictions on what can be done to a system.
In the course of the 1960's and 1970's it was realized that the the category {\QP} is a good framework for the handling of open quantum systems.

\noindent
In the first place, the actions must be linear and positive, and this positivity must be robust against tensor products.
Probabilities must add up to 1. These are restrictions on ``what can be done'' to quantum systems, and they delineate the unital completely positive maps.
On the other hand, there seem to be no serious further restrictions: many unital completely positive maps are realized as quantum operations.
For instance, the following actions are unital and completely positive.

\begin{itemize}
\item[1]
{\it Free evolution of a closed system during some time}, described by automorphisms of the matrix algebra:
   $$a\mapsto u^*au\;,$$
where $u^*u=uu^*=\one$.
\item[2]
{\it Embedding} the system into a larger whole. For finite pure quantum systems, i.e. with $\A=M_d$, this must be coupling to an ancilla (auxiliary system),
according to Propositions 3.7 and 3.8 in \cite{KuemmererGreifswald}.
In the Sch\"odinger picture, this takes the form $\ph\mapsto\ph\ten\th$ for some state $\th$ on the ancilla.
In the Heisenberg picture it looks like
    $$\id\ten\th:\quad a\ten b\mapsto\th(b)a\;.$$
\item[3]
{\it Throwing away} or {\it ignoring} part of the system. In the Schr\"odinger picture this is represented by a {\it partial trace}.
If $\rho$ is the density matrix of a state on $M_d\ten M_n$:
   $$\rho\mapsto\tr_2(\rho):a\mapsto\tr\bigl(\rho(a\ten\one_{M_n})\bigr)\;.$$
 We prefer the simpler Heisenberg picture here:
    $$a\mapsto a\ten\one_{M_n}\;.$$
\end{itemize}

\noindent
Combinations of these are again quantum operations. By a variation on Stinespring's Theorem (Proposition \ref{PropStinespring}),
all unital completely positive maps can be written as a coupling to an ancilla, then a free evolution, followed by a restriction to a subsystem:
   $$T(a)=\id\ten\th\bigl(u^*(a\ten\one)u\bigr)\;.$$

\noindent
{\bf The Trivial *-Algebra.}
The matrix algebra $\CC=M_1$ is interpreted as `no system' or `no information'.
In diagrams we denote it by no line at all.
A state, when viewed as an operation $\ph$ from a system $\A$ to `no system' $\CC$, or, in de Schr\"odinger picture by a {\it preparation} of a quantum system `from nothing',
will be denoted by the diagram

\begin{center}
\begin{tikzpicture}
\draw[thick] (0.75,-0.5) -- (0.75,0.5);
\draw[thick] (0.75,-0.5) -- (0,0);
\draw[thick] (0.75,0.5) -- (0,0);\draw(0.45,0)node{$\ph$};
\draw(1.5,0.3)node{$\A$};
\draw(-1,0.3)node{$\CC$};
\draw[thick,decorate,decoration={snake}] (0.75,0) -- (2.2,0);
\end{tikzpicture}
\end{center}

\noindent
There exists a single (completely) positive unital map from $\CC$ to $\A$, mapping $z\in\CC$ to $z\cdot\one\in\A$.
We interpret it physically as the operation of `destroying' or just `forgetting' the system $\A$.
We denote it by the diagram

\begin{center}
\begin{tikzpicture}
\draw[thick,decorate,decoration={snake}] (2.5,0) -- (4,0);
\draw(3.3,0.3)node{$\A$};
\draw(5,0.3)node{$\CC$};
\draw[thick,line width=1pt](3.8,-0.2) -- (4.2,0.2);
\draw[thick,line width=1pt](3.8,0.2) -- (4.2,-0.2);
\end{tikzpicture}
\end{center}

\noindent
The mathematical fact that there exist many positive unital maps $\A\to\CC$, but only one $\CC\to\A$ corresponds neatly to the physical fact that
there are many ways to prepare a system, but only one to destroy it.

\subsection{Preliminary Results}\label{PrelRes}
Let us start with a basic fact about completely positive maps, which we shall give without proof. Consult, for example: \cite{NielsenChuang}, \cite{BOOK}, or \cite{Trieste}

\smallskip
\begin{proposition}[\bf Stinespring]\label{PropStinespring}
Let $T:M_m\to M_n$ be unital and completely positive.
Then there exists $k\in\NN$ and an isometry  $v:\CC^n\to\CC^m\ten\CC^k$ such that for all $a\in M_m$:
   $$T(a)=v^*(a\ten\one_{M_k})v\;.$$
\end{proposition}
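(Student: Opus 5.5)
The plan is to prove Stinespring's theorem in finite dimensions via the Choi matrix. Write $e_{ij}$ for the matrix units of $M_m$ and form the Choi matrix
   $$C:=\sum_{i,j=1}^m e_{ij}\ten T(e_{ij})\in M_m\ten M_n\;.$$
The matrix $E=\sum_{ij}e_{ij}\ten e_{ij}=m\,\ketbra{\Omega}$, with $\Omega=m^{-1/2}\sum_i e_i\ten e_i$, is positive in $M_m\ten M_m$. By complete positivity, $(\id_{M_m}\ten T)(E)=C\ge0$; here we use that positivity of $T\ten\id$ is equivalent to positivity of $\id\ten T$ after a flip of tensor factors. Hence $C$ has a decomposition $C=\sum_{l=1}^k\ketbra{\xi_l}$ with $\xi_l\in\CC^m\ten\CC^n$, for instance via its spectral decomposition, so that $k\le mn$.

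Next I extract Kraus operators. Each $\xi_l$ can be written uniquely as $\xi_l=\sum_i e_i\ten \bar{w}_l e_i$ for some linear map $w_l:\CC^m\to\CC^n$; equivalently, define $V_l:\CC^n\to\CC^m$ by $\inp{e_i}{V_l\eta}=\inp{\xi_l}{e_i\ten\eta}$. Comparing the $(i,j)$ blocks of $C$ gives
   $$T(e_{ij})=\sum_l V_l^*e_{ij}V_l$$
after a routine index check, where one should take care with conjugations and choose the convention that makes the block identity hold. By linearity, $T(a)=\sum_l V_l^*aV_l$ for all $a\in M_m$.

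Finally I define $v:\CC^n\to\CC^m\ten\CC^k$ by $v\eta:=\sum_l V_l\eta\ten f_l$, with $(f_l)$ the standard basis of $\CC^k$. A direct computation gives
   $$v^*(a\ten\one_{M_k})v=\sum_l V_l^*aV_l=T(a)\;,$$
and unitality then yields $v^*v=T(\one)=\one$, so $v$ is an isometry.

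The only delicate point is the bookkeeping in the second step: the bilinear identification of $\xi_l$ with the operator $V_l$, and checking that the block entries of $C$ reproduce $V_l^*e_{ij}V_l$ rather than the transpose or complex conjugate. I would fix the convention by testing on the identity map $T=\id$, where $C=E$, $k=1$, and the map $V_1$ should come out as the identity.
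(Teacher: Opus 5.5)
Your proof is correct. The paper gives no proof of this proposition; it quotes it as a standard fact with references to the literature, so there is no argument in the paper to match.

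The step you flag as delicate produces no transpose or conjugate under the paper's convention (inner products linear in the second slot). Write $\xi_l=\sum_i e_i\ten\eta_{l,i}$. Your defining relation $\inp{e_i}{V_l\eta}=\inp{\xi_l}{e_i\ten\eta}$ then says exactly $V_l^*e_i=\eta_{l,i}$. Hence $\ketbra{\xi_l}=\sum_{i,j}e_{ij}\ten\ket{\eta_{l,i}}\bra{\eta_{l,j}}=\sum_{i,j}e_{ij}\ten V_l^*e_{ij}V_l$. The $\bar w_l$ parametrization is superfluous; it merely gives $V_l=w_l^{T}$.

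The classical route in the cited references is Stinespring's GNS-type construction:
\begin{itemize}
\item equip $M_m\ten\CC^n$ with the form $\inp{a\ten\xi}{b\ten\eta}_T:=\inp{\xi}{T(a^*b)\eta}$ and quotient by its null space;
\item let $M_m$ act by left multiplication and set $v\eta:=[\one\ten\eta]$;
\item invoke that every unital representation of $M_m$ is unitarily equivalent to $a\mapsto a\ten\one_{M_k}$.
\end{itemize}
Your Choi-matrix argument is tied to full matrix algebras, but it is purely linear-algebraic and more explicit. It produces Kraus operators directly and bounds $k$ by the rank of the Choi matrix, so $k\le mn$. It also uses complete positivity only once, for $\id_{M_m}\ten T$ applied to the single element $E$, and so shows in passing that $m$-positivity already implies complete positivity. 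The GNS route needs the extra step about representations of $M_m$ to reach the form stated here, but it applies verbatim to arbitrary C*-algebra domains.
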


\begin{proposition}[\bf Kadison-Schwarz]\label{PropCS}
Let $T$ be a unital completely positive map $\A\to\B$. Then we have, in the positive definite ordering of $\B$:
   $$T(a^*a)\ge T(a)^*T(a)\;.$$
\end{proposition}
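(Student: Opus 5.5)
The plan is to derive the Kadison--Schwarz inequality from the Stinespring representation (Proposition \ref{PropStinespring}). Since $\A$ and $\B$ are matrix algebras, I first reduce to the case $\B\subset M_n$, $\A\subset M_m$. The inequality $T(a^*a)\ge T(a)^*T(a)$ is an inequality in $M_n$, and the positive definite ordering of $\B$ is the one inherited from $M_n$, so nothing is lost by regarding $T$ as a map into $M_n$.

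The domain is the more delicate issue. Proposition \ref{PropStinespring} is stated for $T:M_m\to M_n$, whereas $\A$ may be a proper unital *-subalgebra of $M_m$. I would handle this in one of two ways. The first is to invoke Arveson's extension theorem to extend $T$ to a unital completely positive map $M_m\to M_n$; the inequality for the extension then restricts to $\A$. To avoid Arveson, the second is the standard direct route through complete positivity. Apply $T\ten\id_{M_2}$ to the positive matrix
$$\twomatrix{\one}{a}{0}{0}^*\twomatrix{\one}{a}{0}{0}=\twomatrix{\one}{a}{a^*}{a^*a}\ge0,$$
which gives
$$\twomatrix{\one}{T(a)}{T(a)^*}{T(a^*a)}\ge0 .$$
A positive $2\times2$ block matrix with $\one$ in the corner then yields $T(a^*a)-T(a)^*T(a)\ge0$: test it against vectors $\twosvector{-T(a)\xi}{\xi}$, or equivalently take the Schur complement. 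This second argument is self-contained and works for arbitrary $\A$, so it is the one I would actually write down.

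For completeness I would also give the Stinespring version when $\A=M_m$. With $T(a)=v^*(a\ten\one)v$ and $v$ an isometry, $vv^*=:p$ is a projection with $p\le\one$. Hence
$$T(a)^*T(a)=v^*(a^*\ten\one)\,p\,(a\ten\one)v\le v^*(a^*\ten\one)(a\ten\one)v=T(a^*a),$$
using $x^*px\le x^*x$ for $x=(a\ten\one)v$.

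The only step needing care is the positivity of the $2\times2$ block matrix implying the Schur complement inequality. I expect this to be routine. Expanding $\inp{\eta}{M\eta}\ge0$ for $\eta=\twosvector{-T(a)\xi}{\xi}$ gives $\inp{\xi}{(T(a^*a)-T(a)^*T(a))\xi}\ge0$ directly.
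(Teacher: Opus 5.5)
Your proposal is correct, and the argument you would actually write down is exactly the paper's proof: apply $T\ten\id_{M_2}$ to a positive $2\times2$ block matrix built from $\one$, $a$, $a^*$, $a^*a$, then test the image against a vector of the form $\twosvector{-T(a)\xi}{\xi}$. The only difference is that you put $\one$ in the upper-left rather than the lower-right corner, so the Stinespring and Arveson detours are unnecessary. Writing the input as $X^*X$ is worthwhile, because the paper's displayed input matrix has $a$ and $a^*$ interchanged (as printed it is not positive for non-normal $a$); the matrix it actually needs, consistent with its image matrix, is $\twosmatrix{a^*a}{a^*}{a}{\one}$.
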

\begin{proof}
Suppose that $\A\subset M_d$ and $\B\subset M_{d'}$.
For $a\in\A$, the matrix $\begin{pmatrix}a^*a&a\cr a^*&\one_\A\end{pmatrix}\in M_d\ten M_2$ is positive definite.
It then follows from the complete positivity of $T$ that for all $\psi\in\CC^{d'}$
   $$\inp{\psi}{\bigl(T(a^*a)-T(a)^*T(a)\bigr)\psi}=\begin{pmatrix}\psi,&-T(a)\psi\end{pmatrix}
                                                                                \begin{pmatrix}T(a^*a)&T(a^*)\cr T(a)&T(\one_\A)\end{pmatrix}
                                                                                \begin{pmatrix}\psi\cr -T(a)\psi\end{pmatrix}\ge0\;.$$
\end{proof}

\begin{proposition}[\bf Multiplicativity]\label{PropMult}
If $T(a^*a)=T(a)^*T(a)$ for some $a\in\A$, then we have for all $b\in\A$:
\begin{equation}\label{EqMult}
T(ba)=T(b)T(a)\;.
\end{equation}
\end{proposition}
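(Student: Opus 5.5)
The plan is to use the $2\times2$ block-matrix trick from the proof of Proposition \ref{PropCS}, now with two elements $a$ and $b$. Assume $\A\subset M_d$ and $\B\subset M_{d'}$, and fix $a,b\in\A$.

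First, the matrix
   $$\begin{pmatrix}a^*a & a^*b^*\cr ba & bb^*\end{pmatrix}=\begin{pmatrix}a^*\cr b\end{pmatrix}\begin{pmatrix}a & b^*\end{pmatrix}\in\A\ten M_2$$
is positive definite, being of the form $x^*x$. By complete positivity of $T$, the matrix
   $$M:=\begin{pmatrix}T(a^*a) & T(a^*b^*)\cr T(ba) & T(bb^*)\end{pmatrix}$$
is positive definite in $\B\ten M_2$. Note that $T$ preserves adjoints, since it is positive, so $T(a^*b^*)=T(ba)^*$.

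Next, I would subtract a second positive matrix. The matrix
   $$\begin{pmatrix}T(a)^*\cr T(b)\end{pmatrix}\begin{pmatrix}T(a) & T(b)^*\end{pmatrix}=\begin{pmatrix}T(a)^*T(a) & T(a)^*T(b)^*\cr T(b)T(a) & T(b)T(b)^*\end{pmatrix}$$
is also positive. The natural step is to show that $M$ minus this matrix is still positive; that difference is
   $$N:=\begin{pmatrix}0 & X^*\cr X & T(bb^*)-T(b)T(b)^*\end{pmatrix},\qquad X:=T(ba)-T(b)T(a),$$
where the top-left entry vanishes by hypothesis. This positivity is where I expect the main obstacle to lie.

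The first thing I would try is a Stinespring representation, which makes positivity of $N$ immediate. Write $T(a)=v^*\pi(a)v$, with $\pi$ a $*$-homomorphism: this is Proposition \ref{PropStinespring} after embedding $\A\subset M_d$ and extending $T$ to $M_d$ by Arveson. If extension is to be avoided, the same conclusion follows from a direct Cauchy–Schwarz argument on the positive form, sketched below. With $P:=\one-vv^*\ge0$ one gets
   $$N=\begin{pmatrix}v^*\pi(a)^*\cr v^*\pi(b)\end{pmatrix}P\begin{pmatrix}\pi(a)v & \pi(b)^*v\end{pmatrix}\ge0.$$

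Given $N\ge0$, take vectors $\psi,\chi\in\CC^{d'}$ and evaluate the quadratic form of $N$ at $(t\psi,\chi)$ with $t\in\RR$:
   $$2t\,\Re\inp{X\psi}{\chi}+\inp{\chi}{\bigl(T(bb^*)-T(b)T(b)^*\bigr)\chi}\ge0.$$
This holds for all real $t$ and is linear in $t$, so the coefficient $\Re\inp{X\psi}{\chi}$ must vanish. Replacing $\chi$ by $i\chi$ kills the imaginary part as well, so $\inp{X\psi}{\chi}=0$ for all $\psi,\chi$. Hence $X=0$, which is exactly (\ref{EqMult}).

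Alternatively, without Stinespring one can argue directly with the Kadison–Schwarz form. The map $(x,y)\mapsto\inp{\psi}{(T(x^*y)-T(x)^*T(y))\psi}$ is a positive semidefinite sesquilinear form on $\A$, by the block argument of Proposition \ref{PropCS} applied to $x+\lambda y$. By hypothesis the form vanishes on $(a,a)$, so Cauchy–Schwarz gives that it vanishes on $(a,y)$ for all $y$. The resulting identity $\inp{\psi}{T(a^*y)\psi}=\inp{\psi}{T(a)^*T(y)\psi}$ for all $\psi$ polarizes to the operator identity $T(a^*y)=T(a)^*T(y)$. Taking adjoints gives $T(y^*a)=T(y)^*T(a)$, and setting $y=b^*$ yields (\ref{EqMult}).

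I would present this second route, since it is self-contained and uses only Proposition \ref{PropCS}.
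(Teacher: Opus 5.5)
Your second route is correct and is essentially the paper's proof: the paper also builds the semi-inner product $\inp{x}{y}_{T,\ph}=\ph\bigl(T(x^*y)-T(x)^*T(y)\bigr)$ (for arbitrary states $\ph$ of $\B$ rather than vector states), uses Kadison--Schwarz for positivity and Cauchy--Schwarz to kill $\inp{b^*}{a}_{T,\ph}$, and concludes because this holds for all states. That last step is the same as your polarization step. Your block-matrix/Stinespring route is also valid, but it is not needed.
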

We say that $a$ is {\it multiplicative} for $T$.

\begin{proof}
For every state $\ph$ on $\A$, define the semi-inner product $\inp\cdot\cdot_{T,\ph}$ on $\A$ by
   $$\inp a b_{T,\ph}:=\ph\bigl(T(a^*b)-T(a)^*T(b)\bigr)\;.$$
By Proposition \ref{PropCS} we then have for all $a\in\A$ that $\inp a a_{T,\ph}\ge0$.
By the Cauchy-Schwarz inequality for $\inp\cdot\cdot_{T,\ph}$ we have
   $$|\inp a b_{T,\ph}|^2\le|\inp a a_{T,\ph}|\cdot|\inp b b_{T,\ph}|\;.$$
Hence, if $\inp a a_{T,\ph}=\ph\bigl((Ta^*a)-T(a)^*T(a)\bigr)=0$, then for all $b\in\A$:
   $$\ph\bigl((T(ba)-T(b)T(a)\bigr)=\inp{b^*}a_{T,\ph}=0\;.$$
  And since this holds for all states $\ph$ on $\A$, we may conclude that (\ref{EqMult}) holds.
\end{proof}

\begin{definition}
A set $S$ of states on a matrix algebra $\A$ is called {\em faithful} if for all nonnegative definite $a\in\A$:
\begin{equation}\label{EqFaithful}
\biggl(\forall_{\ph\in S}: \ph(a)=0\biggr)\Implies a=0\;.
\end{equation}
\end{definition}

\noindent
We say that an observable $a\in\A$ is {\it invariant} under $T:\A\to\A$ if $T(a)=a$.
A state $\ph$ on $\A$ is {\it invariant} under $T$ if $T^*\ph=\ph\circ T=\ph$.

\smallskip\begin{proposition}[\bf Fixed Point Algebra]\label{PropFixAlgebra}
If $T:\A\to\A$ has a faithful set $S\subset\S(\A)$ of invariant states, then the set $\I(T)$ of all invariant observables forms a matrix subalgebra of $\A$.
\end{proposition}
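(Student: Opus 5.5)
The plan is to check the three defining properties of a matrix subalgebra for $\I(T)$: it is a subspace containing $\one_\A$, it is closed under adjoints, and it is closed under multiplication. The first two are immediate. $\I(T)$ is the kernel of the linear map $T-\id$, and $T(\one_\A)=\one_\A$ because $T$ is unital. A positive map is Hermitian-preserving, $T(a^*)=T(a)^*$, since every element is a combination of two self-adjoint elements and each of those is a difference of positive ones. Hence $T(a)=a$ implies $T(a^*)=a^*$.

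The real work is closure under multiplication, and the idea is to turn invariance plus faithfulness into multiplicativity. Let $a\in\I(T)$. By Proposition~\ref{PropCS},
$$d:=T(a^*a)-T(a)^*T(a)=T(a^*a)-a^*a\ge0.$$
Now apply an invariant state $\ph\in S$. Using $\ph\circ T=\ph$,
$$\ph(d)=\ph\bigl(T(a^*a)\bigr)-\ph(a^*a)=\ph(a^*a)-\ph(a^*a)=0.$$
This holds for every $\ph\in S$, and $S$ is faithful by (\ref{EqFaithful}), so the nonnegative element $d$ vanishes. Thus $T(a^*a)=T(a)^*T(a)$, i.e.\ $a$ is multiplicative for $T$.

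Proposition~\ref{PropMult} then gives $T(ba)=T(b)T(a)$ for all $b\in\A$. Taking $b\in\I(T)$ as well, we get $T(ba)=ba$, so $ba\in\I(T)$. Since $a$ and $b$ were arbitrary invariant elements, $\I(T)$ is closed under products, and it is therefore a unital $*$-subalgebra of $\A$, i.e.\ a matrix subalgebra.

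The only delicate step is the passage from $\ph(d)=0$ for all $\ph\in S$ to $d=0$. This requires $d$ to be nonnegative definite, which is exactly what Kadison--Schwarz provides, so faithfulness can be invoked. I expect no further obstacle.
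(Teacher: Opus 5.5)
Your proof is correct and uses the paper's key step: Kadison--Schwarz gives $T(a^*a)-a^*a\ge0$ for $a\in\I(T)$, and faithfulness of the invariant states forces equality. The only difference is the last step. The paper gets $T(a^*b)=a^*b$ for $a,b\in\I(T)$ by polarization ($a\mapsto a+b$, $a\mapsto a+ib$), whereas you invoke Proposition~\ref{PropMult}, which yields the slightly stronger fact $T(ba)=T(b)a$ for all $b\in\A$; you also check $\one_\A\in\I(T)$, which the paper leaves implicit.
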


\begin{proof}
If $a\in\I(T)$, then also $a^*\in\I(T)$, since $T(a^*)=T(a)^*$.
Moreover for all $\ph\in S$ we have
   $$0=\ph\circ T(a^*a)-\ph(a^*a)=\ph\bigl(T(a^*a)-T(a)^*T(a)\bigr)\;.$$
By Kadison-Schwarz (Proposition \ref{PropCS}) $T(a^*a)-a^*a=T(a^*a)-T(a)^*T(a)\ge0$, hence by faithfulness (\ref{EqFaithful})
it follows that $T(a^*a)=a^*a$, i.e., $a^*a\in\I(M)$.
Replacing $a$ by $a+b$, where $a$ and $b$ are both in $\I(T)$, and then by $a+ib$, and subtracting
(i.e. by ``polarization'') we obtain that $T(a^*b)=a^*b$: if $a$ and $b$ are invariant, then so is $a^*b$.
\end{proof}

\section{Definitions of Copying}\label{CpDefs}
A copying operation should map a system with observable algebra $\A$ in a state $\ph$ to a pair of systems,
described by the algebra $\A\ten\A$, each in the state $\ph$.
Hence in the Schr\" odinger picture we want our copier to be of the form
    $$C^*:\S(\A)\to\S(\A)\ten\S(\A)\;.$$
For the Heisenberg picture this means
    $$C:\A\ten\A\to\A\;.$$  
\begin{definition}[\bf Cloning]\label{DefCloner}
We say that $C:\A\ten\A\to\A$ {\em clones} the pure state $\ph$ on $\A$ if
   $$C^*\ph=\ph\ten\ph\;.$$
\end{definition}

\noindent
We only consider cloners for {\it pure} states $\ph$.
The reason for this restriction is explained easiest by considering the classical algebra $\F(\Omega)$ for the finite set $\Omega=\{A,B\}$,
where $A$ and $B$ may be documents.
Suppose we feed $A$ into the device with probability $\third$ and $B$ with probability $\twothirds$,
and suppose that it, with this probability distribution $(\third,\twothirds)$ as input,
yields as output the product distribution $\textstyle(\frac19,\frac29,\frac29,\frac49)$
on $\{(A,A),(A,B),(B,A),(B,B)\}$.
Then the copies are {\it independent}, and not {\it identical}.
This is not what a copying machine is supposed to do: it should map $(\third,\twothirds)$ to $(\third,0,0,\twothirds)$,
producing the pair $(A,A)$ with probability $\third$ and $(B,B)$ with probability $\twothirds$.

\noindent
In this example only for the extremal distributions $\d_A$ and $\d_B$ is this behaviour reasonable:
\begin{equation}\label{EqDeltas}
\d_A\mapsto\d_A\ten\d_A\quad\hbox{and}\quad\d_B\mapsto\d_B\ten\d_B\;.
\end{equation}

\noindent
Instead of Definition \ref{DefCloner},
for mixed states \cite{BarnumEtAl1}\ et al. proposed the following weaker definition,
asking the marginals of the output distribution to reflect the input state without requiring independence. 

\begin{definition}[\bf Broadcasting]\label{DefBroadcaster}
The map $C:\A\ten\A\to\A$ is said to {\em broadcast} the state $\ph$ on $\A$ if for all $a\in\A$:
\begin{equation}\label{EqBroadcast}
\ph\circ C(a\ten\one_\A)=\ph\circ C(\one_\A\ten a)=\ph(a)\;.
\end{equation}
\end{definition}
\noindent
The following diagram expresses this situation:

\smallskip
\begin{center}
\begin{tikzpicture}
\draw[thick,decorate,decoration={snake}] (0.75,0) -- (1.5,0);
\draw[thick] (1.5,-0.5) rectangle (2.5,0.5);\draw (2,0) node {$C$};
\draw[thick,decorate,decoration={snake}] (2.5,.333) -- (4,.333);
\draw[thick,decorate,decoration={snake}] (2.5,-0.333) -- (4,-0.333);
\draw[thick,line width=1pt](3.8,0.133) -- (4.2,0.533);
\draw[thick,line width=1pt](3.8,0.533) -- (4.2,0.133);
\draw[thick] (0.75,-0.5) -- (0.75,0.5);
\draw[thick] (0.75,-0.5) -- (0,0);
\draw[thick] (0.75,0.5) -- (0,0);\draw(0.45,0)node{$\ph$};
\draw[thick, line width=1.5pt](5,-0.1) -- (5.4,-0.1);  
\draw[thick,line width=1.5pt](5,0.1) -- (5.4,0.1);  
\draw[thick,decorate,decoration={snake}] (6.75,0) -- (7.5,0);
\draw[thick] (1.5,-0.5) rectangle (2.5,0.5);\draw (2,0) node {$C$};
\draw[thick,decorate,decoration={snake}] (8.5,.333) -- (10,.333);
\draw[thick,decorate,decoration={snake}] (8.5,-0.333) -- (10,-0.333);
\draw[thick] (7.5,-0.5) rectangle (8.5,0.5);\draw (8,0) node {$C$};
\draw[thick,line width=1pt](9.8,-0.533) -- (10.2,-0.133);
\draw[thick,line width=1pt](9.8,-0.133) -- (10.2,-0.533);
\draw[thick] (6.75,-0.5) -- (6.75,0.5);
\draw[thick] (6.75,-0.5) -- (6,0);
\draw[thick] (6.75,0.5) -- (6,0);\draw(6.45,0)node{$\ph$};
\draw[thick, line width=1.5pt](11,-0.1) -- (11.4,-0.1);  
\draw[thick,line width=1.5pt](11,0.1) -- (11.4,0.1);  
\draw[thick] (12.75,-0.5) -- (12.75,0.5);
\draw[thick] (12.75,-0.5) -- (12,0);
\draw[thick] (12.75,0.5) -- (12,0);\draw(12.45,0)node{$\ph$};
\draw[thick,decorate,decoration={snake}] (12.75,0) -- (13.5,0);
\end{tikzpicture}
\end{center}

\smallskip\noindent
Broadcasting the state $\ph$, and throwing away (or ignoring) one of the copies, leaves us with a system in the state $\ph$.

\noindent
Henceforth we shall use the word ``copying'' in the nontechnical sense, and the words ``cloning'' and ``broadcasting'' in the sense of definitions  \ref{DefCloner} and \ref{DefBroadcaster} above.

\noindent
Actually, the term ``cloning'' is superfluous, as the following lemma shows.
\smallskip
\begin{lemma}\label{LemPureCloner}
Let $\ph$ be a pure state on $\A$. Then $C:\A\ten\A\to\A$ clones $\ph$ if and only if it broadcasts $\ph$.
\end{lemma}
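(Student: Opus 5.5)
The direction ``cloning $\Rightarrow$ broadcasting'' is immediate. If $C^*\ph=\ph\ten\ph$, then
$$\ph\circ C(a\ten\one_\A)=(\ph\ten\ph)(a\ten\one_\A)=\ph(a)\ph(\one_\A)=\ph(a),$$
and the same holds for the other marginal. So the work lies in the converse. There I would show the general fact that a state $\omega$ on $\A\ten\A$ whose first marginal is the pure state $\ph$ must be the product $\ph\ten\ph'$, where $\ph'$ is its second marginal. Broadcasting then forces $\ph'=\ph$, and hence $\omega:=C^*\ph=\ph\ten\ph$.

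The first step is a structural description of the pure state $\ph$. A matrix algebra $\A$ is a direct sum of full matrix blocks, $\A\cong\bigoplus_j M_{d_j}$. A pure state vanishes on all but one block. On that block it is pure, hence a vector state $a\mapsto\inp\psi{a\psi}$ as in Example \ref{ExPureSyst}. Let $p\in\A$ be the projection $\ketbra\psi$ sitting in that block. Then $p$ is a minimal projection with $\ph(p)=1$ and
$$pap=\ph(a)\,p\qquad(a\in\A).$$
This is the step where some care is needed: I would justify that extremality of $\ph$ forces it to live on a single block, since a nontrivial convex split along the central projections would otherwise exist, and that within the block it is a vector state.

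The second step uses Proposition \ref{PropMult}, applied to $\omega$ viewed as a unital completely positive map $\A\ten\A\to\CC$. We have
$$\omega(p\ten\one)=\ph(p)=1,$$
so $\omega\bigl((p\ten\one)^*(p\ten\one)\bigr)=1=|\omega(p\ten\one)|^2$. Thus $p\ten\one$ is multiplicative for $\omega$. Applying (\ref{EqMult}) and its adjoint form, for all $x\in\A\ten\A$ we get
$$\omega(x)=\omega(x)\,\omega(p\ten\one)=\omega\bigl(x(p\ten\one)\bigr)=\omega\bigl((p\ten\one)x(p\ten\one)\bigr).$$
Taking $x=a\ten b$ gives
$$\omega(a\ten b)=\omega(pap\ten b)=\ph(a)\,\omega(p\ten b).$$
Applying the same identity with $a=\one_\A$ gives $\omega(p\ten b)=\omega(\one\ten b)=\ph(b)$, where the last equality is the broadcasting condition (\ref{EqBroadcast}). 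Hence $\omega(a\ten b)=\ph(a)\ph(b)$. By linearity this shows $C^*\ph=\ph\ten\ph$.

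I expect the main obstacle to be the first step, namely producing the projection $p$ with $pAp=\CC p$ for a general (not necessarily full) matrix algebra. Once $p$ is available, the argument reduces to the Multiplicativity proposition.
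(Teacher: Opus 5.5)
Your proof is correct, but it takes a different route from the paper's. The paper never looks for a support projection. For an arbitrary projection $p\in\A$ it splits $\ph(a)=C^*\ph(\one\ten a)=C^*\ph(p\ten a)+C^*\ph((\one-p)\ten a)$ into two positive functionals. It then uses extremality of $\ph$ directly to get $C^*\ph(p\ten a)=\lambda\,\ph(a)$, reads off $\lambda=\ph(p)$ from the other marginal, and concludes because projections span $\A$. That argument is purely convex/order-theoretic and needs no structure theory of $\A$; it would work in any C*-algebra, with positive contractions in place of projections.

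You instead turn purity into an algebraic fact: a rank-one projection $p$ with $\ph(p)=1$ and $pap=\ph(a)p$. This costs you the block decomposition $\A\cong\bigoplus_j M_{d_j}$ and the identification of pure states on a block with vector states; your sketch of both is adequate. In return, the rest is a clean application of Proposition~\ref{PropMult}. Since $p\ten\one$ is multiplicative for $\omega=C^*\ph$, $\omega$ is compressed onto $p\A p\ten\A=\CC p\ten\A$, which forces the product form. This is the same mechanism the paper uses in the proof of Theorem~\ref{ThmNoBroadcasting} (multiplicativity of $\one\ten s$ for $\ph\circ C$). Your version therefore ties the lemma to the later arguments and shows plainly why a pure marginal cannot be correlated with anything.

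Both proofs in effect establish the same general fact: a state on $\A\ten\A$ with one pure marginal is the product of its two marginals. The paper uses the pure marginal in the second slot, you in the first.
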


\begin{proof}
Clearly, if $C^*\ph=\ph\ten\ph$, then both marginals are equal to $\ph$:
   $$\ph\circ C(a\ten\one)=C^*\ph(a\ten\one)=\ph\ten\ph(a\ten\one)=\ph(a)\ph(\one)=\ph(a)\;,$$
and the same for $\one\ten a$.

\noindent
Conversely, suppose that the right marginal of $C^*\ph$ is equal to $\ph$ itself, as in (\ref{EqBroadcast}).
Let $p\in\A$ be a projection. Then we can write $\ph$ as a sum of two positive functionals:
   $$\ph(a)=C^*\ph(\one\ten a)=C^*\ph(p\ten a)+C^*((\one-p)\ten a)\;.$$
Since $\ph$ is pure, both must be multiples of $\ph$, in particular $C^*\ph(p\ten a)=\l\ph(a)$ for
some $\l\ge0$. Putting $a=\one$ we find that $\l=C^*\ph(p\ten\one)$,
and since the left marginal also equals $\ph$, we have $\l=\ph(p)$.
Thus we find
   $$C^*\ph(p\ten a)=\ph(p)\ph(a)\;.$$
As the projections $p$ span the algebra $\A$, we conclude that $C^*\ph=\ph\ten\ph$.
\end{proof}

\section{Results}\label{Results}
\subsection{Copying All States}
Let us first determine under what circumstances a single quantum operation can copy {\it all states} on a system.
The answer is: only if the system is classical:
\smallskip
\begin{theorem}[\bf No Universal Copier]\label{ThmUnivCop}
For a matrix algebra $\A$ and a unital completely positive map $C:\A\ten\A\to\A$ the following are equivalent:
\begin{itemize}
\item[\rm(a)]
$C$ clones all pure states on $\A$;
\item[\rm(b)]
$C$ broadcasts all states on $\A$;
\item[\rm(c)]
$\A$ is commutative and for all $a,b\in\A$ we have $C(a\ten b)=ab$.
\end{itemize}
\end{theorem}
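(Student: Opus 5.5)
I will prove the cycle (c)$\Rightarrow$(b)$\Rightarrow$(a)$\Rightarrow$(c), where the last implication carries all the weight.

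For (c)$\Rightarrow$(b), assume $\A$ is commutative and $C(a\ten b)=ab$. Then $C(a\ten\one)=a=C(\one\ten a)$, so every state $\ph$ satisfies (\ref{EqBroadcast}). It is worth noting that $C$ is then a unital $*$-homomorphism, which is completely positive precisely because $\A$ is commutative. For (b)$\Rightarrow$(a), every pure state is in particular a state, so it is broadcast, and Lemma \ref{LemPureCloner} turns broadcasting of a pure state into cloning.

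For (a)$\Rightarrow$(c), the plan is to reduce to the marginal maps $L(a):=C(a\ten\one)$ and $R(a):=C(\one\ten a)$, which are unital completely positive maps $\A\to\A$.
\begin{itemize}
\item[1.] By (a) and Lemma \ref{LemPureCloner}, every pure state $\ph$ satisfies $\ph\circ L=\ph=\ph\circ R$.
\item[2.] The pure states span the dual of $\A$: every state is a convex combination of pure ones by Krein--Milman in finite dimension, and every functional is a combination of states. Hence $L=R=\id_\A$.
\item[3.] Consequently $C(a^*a\ten\one)=a^*a=C(a\ten\one)^*C(a\ten\one)$, so every $a\ten\one$ is multiplicative for $C$. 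The same holds for every $\one\ten b$.
\item[4.] Proposition \ref{PropMult} then gives
   $$C(a\ten b)=C\bigl((a\ten\one)(\one\ten b)\bigr)=C(a\ten\one)\,C(\one\ten b)=ab\;.$$
\end{itemize}
Proposition \ref{PropMult} is stated as $T(ba)=T(b)T(a)$ with $a$ multiplicative. Here I apply it with the multiplicative element $\one\ten b$ on the right, and taking adjoints handles multiplicativity on the other side.

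Running the same argument in the opposite order gives $C(b\ten a)=C\bigl((\one\ten a)(b\ten\one)\bigr)=ab$ as well, while directly $C(b\ten a)=ba$. Hence $ab=ba$ for all $a,b\in\A$, and $\A$ is commutative.

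The main obstacle is the step from cloning pure states to $L=R=\id$, since the hypothesis concerns only pure states. It needs the spanning argument from step 2, which works because finite-dimensional state spaces are compact convex sets. Everything after that is mechanical application of Propositions \ref{PropCS} and \ref{PropMult}.
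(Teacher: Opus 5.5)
Your proof is correct and is the paper's argument: your (a)$\Rightarrow$(c) merges the paper's (a)$\Rightarrow$(b)$\Rightarrow$(c). Both show that the marginals of $C$ are the identity, use Proposition \ref{PropMult} to make $a\otimes\one$ and $\one\otimes b$ multiplicative, and compare the two factorizations to get $ab=C(a\otimes b)=ba$. The only difference is how you close the cycle: you go (c)$\Rightarrow$(b)$\Rightarrow$(a) using the converse (broadcasting implies cloning) direction of Lemma \ref{LemPureCloner}, whereas the paper checks (c)$\Rightarrow$(a) directly on the point evaluations $\delta_i$ of $\mathcal{F}(\Omega)$.
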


\begin{proof}
(a)$\;\Longrightarrow\;$(b):
As noted in Lemma \ref{LemPureCloner}, the cloner
$C$ is also a copier for all pure states: $\ph\circ C(a\ten\one)=\ph\circ C(\one \ten a)=\ph(a)$.
Since any state is a convex combination of pure states, this equality
extends to all states $\ph$ on $\A$.

\noindent
(b)$\;\Longrightarrow\;$(c):
As the broadcasting equation (\ref{EqBroadcast}) holds for all $\ph$ on $\A$, we have, for all $a\in\A$:
   $$C(a\ten\one)=C(\one\ten a)=a\;.$$
It follows that
   $$C\bigl((a\ten\one)^*(a\ten\one)\bigr)=C(a^*a\ten\one)=a^*a=C(a\ten\one)^*C(a\ten\one)\;;$$
so  $a\ten\one$ is multiplicative for $C$ by Proposition \ref{PropMult}.
In the same way we see that for all $b\in\A$ the matrix $\one\ten b$ is multiplicative.
Therefore
\begin{eqnarray*}
ab&=&C(a\ten\one)C(\one\ten b)=C\bigl((a\ten\one)(\one\ten b)\bigr)=C(a\ten b)\cr
     &=&C\bigl((\one\ten b)(a\ten\one)\bigr)=C(\one\ten b)C(a\ten\one)=ba\;.
\end{eqnarray*}

(c)$\;\Longrightarrow\;$(a):
We may identify $\A$ with the algebra $\F(\Om)$ of all functions on some finite set $\Om$ (cf. Example \ref{ClassSyst}).
The pure states on this algebra are the point evaluations $\d_i:f\mapsto f(i)$, with $i\in\Om$.
These satisfy, for $f,g\in\F(\Om)$:
\begin{eqnarray*}
(C^*\d_i)(f\ten g)&=&\d_i\circ C(f\ten g)=\d_i(f\cdot g)=(f\cdot g)(i)\cr
   &=&f(i)g(i)=\d_i(f)\d_i(g)=(\d_i\ten\d_i)(f\ten g)\;.
\end{eqnarray*}
I.e.: $C^*\d_i=\d_i\ten\d_i$ for the pure states $\d_i$, $i\in\Omega$.
\end{proof}

\subsection{Classical Copiers.}\label{ClassCop}
Condition (c) above says that our system is classical, and that $C$ takes the form of the product operation.
Although this formulation is mathematically neat, physically it may look puzzling.
The following equivalent form may be more transparant.
If $\A=\F(\Om)$, then the map $C:f\ten g\to f\cdot g$ extends linearly to all $h\in\F(\Om)\ten\F(\Om)=\F(\Om\times\Om)$ as
   $$C(h)(i)=h(i,i)\;.$$
This is the algebraic form of the natural copying operation on $\Om$:
   $$i\mapsto(i,i)\;,$$
which takes a system in the `point state' $i$ to two systems, both in that point state.
In the Schr\"odinger picture the classical copier takes the form
\begin{equation}\label{EqCopSchr}
C^*\pi=\pi\circ C=\sum_{i\in\Om}\pi_i\cdot\d_i\ten\d_i\;.
\end{equation}
$C^*$ puts the probability distribution $\pi$ on $\Om$ onto the diagonal of $\Om\times\Om$.
A distribution $\pi$ on the points $i\in\Om$ is mapped to the same distribution on the pairs $(i,i)$.
(See our remarks after Definition \ref{DefBroadcaster}.)
We shall denote the copier of an abelian matrix algebra by the ``fork'' symbol

\smallskip
\begin{center}
\begin{tikzpicture}
\draw[thick] (0,0) -- (2.5,0);
\draw[thick] (2.5,0.5) -- (5,0.5) ;
\draw[thick] (2.5,-0.5) -- (5,-0.5) ;
\draw[thick] (2.5,0.5) -- (2.5,-0.5) ;
\fill[color=black] (2.5,0) circle [radius=0.1];
\end{tikzpicture}
\end{center}

\smallskip\noindent
Here, all the horizontal straight lines stand for the commutative matrix algebra $\F(\Om)$.

\noindent
Theorem \ref{ThmUnivCop}\ makes it clear that classical copiers do exist.

\noindent
They do their work by measuring the point-state $i\in\Om$, and producing a pair $(i,i)$ of them.
(Cf. (\ref{EqDeltas})).
We note that a copying machine does not {\em measure} the probability distribution in order to copy it.
A classical broadcaster does not distinguish between the probability distributions $(\third,\frac23)$ and $(\half,\half)$
on $\Om:=\{A,B\}$ in order to faithfully reproduce the distribution on the copies.
The probability distribution takes care of itself.

\noindent
This phenomenon makes broadcasting very different from mere cloning.


\subsection{Copying from a Given Set of States}\label{CopSetS}
\begin{theorem}[\bf No-Broadcasting Theorem]\label{ThmNoBroadcasting}
Let $S$ be a set of states on a matrix algebra $\A\subset M_d$.
Then the following are equivalent.
\begin{itemize}
\item[\rm(a)]
The states in $S$ have a common broadcaster $C:\A\ten\A\to\A$;
\item[\rm(b)]
All states in $S$ are convex combinations of certain states $\tuple\th k$ on $\A$, which have pairwise orthogonal support projections $\tuple p k\in\A$.
\item[\rm(c)]
The density matrices of the states in $S$ commute pairwise.
\end{itemize}
\end{theorem}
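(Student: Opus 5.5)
The plan is to prove the cycle (c)$\Rightarrow$(b)$\Rightarrow$(a)$\Rightarrow$(c). The first two implications are constructive. The last one is the real no-broadcasting statement, and it is where I expect the work to be.

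\emph{(c)$\Rightarrow$(b).} Let $\rho_\ph\in\A$ denote the density matrix of $\ph\in S$. The $\rho_\ph$ commute pairwise and are self-adjoint, so together with $\one$ they generate a commutative $*$-subalgebra of $\A$. By the identification of Example \ref{ClassSyst}, this subalgebra is spanned by its minimal projections $p_1,\ldots,p_k\in\A$, which are pairwise orthogonal with $\sum_j p_j=\one$. Each $\rho_\ph$ then equals $\sum_j\l_j(\ph)p_j$ with $\l_j(\ph)\ge0$. Discard the $p_j$ on which every $\rho_\ph$ vanishes, and put $\th_j(a):=\tr(p_ja)/\tr(p_j)$. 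Then $\ph=\sum_j\l_j(\ph)\tr(p_j)\,\th_j$ is the required convex combination, and $\th_j$ has support projection $p_j$.

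\emph{(b)$\Rightarrow$(a).} Let $p_0:=\one-\sum_{j\ge1}p_j$, and let $\th_0$ be an arbitrary state. Define
$$C(x):=\sum_{j=0}^k(\id\ten\th_j)\bigl((p_j\ten\one)x(p_j\ten\one)\bigr),\qquad\hbox{so that}\qquad C(a\ten b)=\sum_j\th_j(b)\,p_jap_j\;.$$
Each summand is a compression followed by the map $\id\ten\th_j$ from Section \ref{ProbInt}, hence completely positive. Unitality follows from $\sum_jp_j=\one$. Now take $\ph=\sum_i\mu_i\th_i$. Since $\th_i$ is supported on $p_i$, we get $\th_i(p_jap_j)=\d_{ij}\th_i(a)$. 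Therefore $\ph(C(a\ten\one))=\sum_i\mu_i\th_i(a)=\ph(a)$ and $\ph(C(\one\ten b))=\sum_i\mu_i\th_i\bigl(\sum_j\th_j(b)p_j\bigr)=\ph(b)$.

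\emph{(a)$\Rightarrow$(c), reduction step.} Write $T(a):=C(a\ten\one)$ and $T'(b):=C(\one\ten b)$. These are unital completely positive maps $\A\to\A$, and every $\ph\in S$ is invariant under both. Choose a finite subset of $S$ whose support projections have the same supremum $p$ as those of all of $S$, and let $\oo$ be its average. Then $\oo$ is invariant and faithful on $p\A p$. From $\oo(T(\one-p))=0$ and $T(\one-p)\ge0$ we get $pT(\one-p)p=0$. Hence the compressions $T_p(a):=pT(a)p$ and $T'_p(b):=pT'(b)p$ are unital completely positive maps on $p\A p$, with $S$ a faithful set of invariant states. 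Proposition \ref{PropFixAlgebra} then makes $\I(T_p)$ and $\I(T'_p)$ matrix subalgebras. Copying the multiplicativity trick of Theorem \ref{ThmUnivCop}, I would show that $a\ten\one$ (for $a\in\I(T_p)$) and $\one\ten b$ (for $b\in\I(T'_p)$) are multiplicative for the compressed copier $pC(\cdot)p$. The displayed computation in that proof should then give $ab=ba$, so $\I(T_p)$ and $\I(T'_p)$ commute.

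\emph{(a)$\Rightarrow$(c), the obstacle.} It remains to connect the densities to these fixed-point algebras, and I expect this to be the hardest step. My first attempt is to pass to the $\oo$-adjoint (Petz dual) $T_p^\sharp$, defined by $\tr(\rho_\oo^{1/2}T_p(a)\rho_\oo^{1/2}b)=\tr(\rho_\oo^{1/2}a\rho_\oo^{1/2}T_p^\sharp(b))$. For $\ph\le\l\oo$, I would show that the Radon--Nikodym element $h_\ph:=\rho_\oo^{-1/2}\rho_\ph\rho_\oo^{-1/2}$ is fixed by both $T_p^\sharp$ and ${T'}_p^\sharp$. Invariance of $\ph$ gives this, and faithfulness together with Kadison--Schwarz applied to the dual maps should place these fixed elements inside the fixed-point algebras. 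The $h_\ph$ then lie in $\I(T_p)\cap\I(T'_p)$, which is commutative because the two algebras commute. It also has to be checked that $\rho_\oo$ commutes with this intersection, using the modular invariance of $\oo$. If both points hold, then $\rho_\ph=\rho_\oo^{1/2}h_\ph\rho_\oo^{1/2}$ all commute, which gives (c).
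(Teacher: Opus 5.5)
\textbf{Gap in the last step of (a)$\Rightarrow$(c).} Your (c)$\Rightarrow$(b) is correct. So is your (b)$\Rightarrow$(a): the broadcaster $C(a\ten b)=\sum_j\th_j(b)p_jap_j$ is a valid variant of the paper's measure-and-prepare map, and your $p_0$ even handles unitality when $\sum_j p_j\neq\one$. The reduction step is also correct, ending with $\I(T_p)$ and $\I(T'_p)$ commuting elementwise.

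The gap is the final step of (a)$\Rightarrow$(c), which is only a plan, and the tools you name do not carry it out. It is true that $T_p^\sharp(h_\ph)=h_\ph$. But Kadison--Schwarz plus faithfulness applied to $T_p^\sharp$ (for which $\oo$ is indeed a faithful invariant state) gives only, via Proposition~\ref{PropFixAlgebra}, that $\I(T_p^\sharp)$ is an algebra. It says nothing about the inclusion $\I(T_p^\sharp)\subseteq\I(T_p)$.

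That inclusion does hold, but the standard proofs need one of two facts, neither of which is a Schwarz inequality:
\begin{itemize}
\item[1.] $T_p$ is a contraction for the KMS inner product $\tr(\rho_\oo^{1/2}x^*\rho_\oo^{1/2}y)$, so that $T_p$ and its adjoint $T_p^\sharp$ share fixed vectors. This is an interpolation or monotone-metric result.
\item[2.] $\I(T_p)$ is invariant under the modular automorphisms $x\mapsto\rho_\oo^{it}x\rho_\oo^{-it}$, which is Takesaki's theorem.
\end{itemize}
Your second open point, that $\rho_\oo$ commutes with $\I(T_p)\cap\I(T'_p)$, needs exactly this modular invariance of the fixed-point algebras. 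Invariance of $\oo$ under its own modular group is automatic and gives nothing. So, as written, (a)$\Rightarrow$(c) is not proved.

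\textbf{The missing idea.} The paper's device makes densities and Petz duals unnecessary; it is also what would underlie the modular invariance. It is the ergodic average $E:=\li n\frac1n\sum_{j=0}^{n-1}T_p^j$. This is a unital positive projection of $p\A p$ onto $\I(T_p)$ with $\ph\circ E=\ph$ for every $T_p$-invariant $\ph$.

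To use it, first symmetrize $C$ to $a\ten b\mapsto\half\bigl(C(a\ten b)+C(b\ten a)\bigr)$, which is still a broadcaster for $S$. Then $T=T'$, and your commutation result says that $\I(T_p)$ itself is commutative. Without symmetrizing this fails: $C(a\ten b)=\ph(b)a$ broadcasts a faithful state $\ph$ on $M_2$, yet its marginal $T$ is the identity, so $\I(T_p)=M_2$.

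Now let $p_1,\ldots,p_k$ be the minimal projections of $\I(T_p)$. Then $E(a)=\sok i\a_i(a)p_i$ with states $\a_i$ satisfying $\a_i(p_j)=\d_{ij}$ (apply $E$ to $p_j$). For $\ph\in S$,
$$\ph(a)=\ph(E(pap))=\sok i\ph(p_i)\,\a_i(pap).$$
This is (b), with $\th_i:=\a_i(p\,\cdot\,p)$ supported on $p_i$. Then (c) follows, because states with orthogonal supports have commuting densities.
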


\noindent
The hard part of the proof is the implication (a)$\Longrightarrow$(b):
it basically says that the only way to broadcast a set of states is to perform a von Neumann measurement on the system  using certain projections $\tuple p k$,
and on the basis of the outcome to decide which state to output in duplicate.

\begin{proof}
(b)$\Longrightarrow$(c):
Since the states $\tuple \th k$ have orthogonal supports, their density matrices commute,
and so do convex combinations of these.

\smallskip\noindent
(c)$\Longrightarrow$(b):
Let $\C\subset\A$ denote the commutative matrix algebra generated by the density matrices $\set{\rho_\ph}{\ph\in S}$. 
After diagonalization of this algebra we divide the diagonal into (say $k$) blocks, inside which all the $\rho_\ph$ are constant.
Thus we obtain a division of $\one_\A$ into disjoint projections $\tuple p k$. By normalization we define states
$\tuple\th k$ with supports $\tuple p k$:
   $$\th_i(a):=\frac{\tr(p_ia)}{\tr p_i}\;.$$
Note that $\th_i(p_j)=\d_{ij}$.
Every state $\ph\in S$ is a convex combination of these states: $\ph=\sok i \l^\ph_i\th_i$.
In particular
$\ph(p_j)=\sok i\l^\ph_i\th_i(p_j)=\l^\ph_j$, so that
\begin{equation}\label{EqConvComb}
\ph=\sok i \ph(p_i)\th_i\;,
\end{equation}
with $\sok i \ph(p_i)=\ph(\one_\A)=1$.

\smallskip\noindent
(b)$\Longrightarrow$(a):
We may assume (\ref{EqConvComb}).
Define $C$ in the Schr\"odinger picture by
\begin{equation*}
C^*:\S(\A)\to\S(\A)\ten\S(\A):\quad \ph\mapsto\sok j \ph(p_j)\th_j\ten\th_j\;.
\end{equation*}
Then $C$ is a broadcaster of $S$, since for all $\ph\in S$, by (\ref{EqConvComb}),
   $$(C^*\ph)(a\ten\one)=(C^*\ph)(\one\ten a)=\sok j \ph(p_j)\th_j(a)=\ph(a)\;.$$
See Definition \ref{DefBroadcaster}; note that in the Heisenberg picture
\begin{equation}\label{EqBroad}
C(a\ten b)=\sok i\th_i(a)\th_i(b)p_i\;.
\end{equation}

\smallskip\noindent
(a)$\Longrightarrow$(b):
Suppose that a broadcaster $C:\A\ten\A\to\A$ for all the states $\ph\in S$ is given.
We may assume that $C$ is symmetric, otherwise we continue the discussion with the symmetrized broadcaster $a\ten b\mapsto\half\bigl(C(a\ten b)+C(b\ten a)\bigr)$.

\noindent
Let $s$ be the support of $S$, i.e., the smallest projection in $\A$ such that $\ph(s)=1$ for all $\ph\in S$.
Let $\A_s:=s\A s$, a matrix algebra with unit $\one_{\A_s}=s$.
By $S_s$ we shall denote the set of restrictions of the states $\ph\in S$ to $\A_s$.

\noindent
Define $C_s:\A_s\ten\A_s\to\A_s$ by:
     $$C_s(a\ten b):=sC(a\ten b)s\;.$$
Take $\psi\in S_s$, say $\psi(a)=\ph(a)$ for $a\in\A_s$, and some $\ph\in S$.
Since $\ph(s)=1$, the projection $s$ is multiplicative for $\ph$,
and since $\ph\circ C(\one\ten s)=1$, the matrix $\one\ten s$ is multiplicative for $\ph\circ C$.
Therefore, still for all $a\in\A_s$
\begin{eqnarray*} 
\psi\circ C_s(a\ten\one_{\A_s})&=&\ph\bigl(sC(a\ten s)s\bigr)=\ph(s)\ph\Bigl(C\bigl((a\ten\one)(\one\ten s)\bigr)\Bigr)\ph(s)\cr
   &=&\ph\circ C(a\ten\one)\cdot\ph\circ C(\one\ten s)=\ph(a)\ph(s)=\ph(a)=\psi(a)\;.
\end{eqnarray*}
I.e., $C_s$ broadcasts all states in $S_s$.

\noindent
Consider the marginal operation $M:\A_s\to\A_s:a\mapsto C_s(a\ten\one_{\A_s})$.
By assumption, $\psi\circ M=\psi$ for all $\psi\in S_s$.
As $S$ has support $s$, this set is faithful for $M$. By Proposition \ref{PropFixAlgebra} the invariant set $\I(M)$ is a matrix subalgebra of $\A_s$.

\noindent
It follows that, for $a\in\I(M)$ we have $a^*a\in\I(M)$, hence
   $$C_s\bigl((a\ten s)^*(a\ten s)\bigr)=C_s(a^*a\ten s)=M(a^*a)=a^*a=C_s(a\ten s)^*C_s(a\ten s)\;,$$
so $a\ten s$ is multiplicative for $C_s$. Therefore, for all $a,b\in\I(M)$,
\begin{equation}\label{EqImC}
C_s(a\ten b)=C_s(a\ten\one_{\A_s})\cdot C_s(\one_{\A_s}\ten b)=ab\in\I(M).
\end{equation}
We conclude that $C_s$ in restriction to $\I(M)\ten\I(M)$ maps to $\I(M)$ itself.
By definition of $\I(M)$ the map $C_s$ broadcasts all states on $\I(M)$.
But by Theorem \ref{ThmUnivCop} this is only possible if $\I(M)$ is commutative!

\noindent
Let $\tuple p k$ be the minimal projections in the commutative matrix algebra $\I(M)$.
Consider the ergodic projection $E:\A_s\to\I(M)$:
   $$E:=\li n\frac1n\sum_{j=0}^{n-1}M^j\;.$$
Since every element of $\I(M)$ is a linear combination of the projections $\tuple p k$,
we may write, for $a\in\A_s$,
\begin{equation}\label{EqErgProj}
E(a)=\sok i \a_i(a)p_i\;.
\end{equation}
for some functionals $\tuple\a k: \A_s\to\CC$.
As $E$ is positive, so are the $\a_i$, and as $E(\one_{A_s})=\one_{A_s}$, we find that $\a_i(\one_{\A_s})=1$.
Define $\tuple\th k$ by $\th_i(a):=\a_i(sas)$.

\noindent
We now have for $j=1,\ldots,k$:
   $$p_j=E(p_j)=\sok i\th_i(p_j)p_i\,,$$
and therefore $\th_i(p_j)=\d_{ij}$, i.e., the states have disjoint supports.
Moreover, for all $a\in\A$ and $\ph\in S$ we have by (\ref{EqErgProj}):
   $$\ph(a)=\ph(sas)=\ph\circ E(sas)=\sok i\a_i(sas)\ph(p_i)=\sok i\ph(p_i)\th_i(a)\;.$$
Since $\sok i p_i=s$ and $\ph(s)=1$,
the states $\ph\in S$ are convex combinations of the states $\tuple\th k$.
\end{proof}

\noindent{\bf Discussion of the No-Broadcasting Theorem.}
The von Neumann measurement using the projections $\tuple p k$ can be seen as a quantum operation
from a quantum system to a classical probability space given in the Heisenberg picture by
   $$N:\F(\Omega_k)\to\A:\quad f\mapsto\sok i f(i)p_i\;.$$
The subsequent preparation of a state $\th_i$ is also a quantum operation, $\Theta$ say:
   $$\Theta:\A\to\F(\Omega_k):\quad \theta(a)(i):=\th_i(a)\;.$$
Equation (\ref{EqConvComb}) can then be written as
\begin{equation}\label{EqKluunOp}
\ph\circ N\circ\Theta=\ph\;,
\end{equation}
in a picture:

\smallskip
\begin{center}
\begin{tikzpicture}
\draw[thick,decorate,decoration={snake}] (0.75,0) -- (1.5,0);
\draw[thick] (1.5,-0.5) rectangle (2.5,0.5);\draw (2,0) node {$N$};
\draw[thick] (2.5,0) -- (4,0);
\draw[thick] (0.75,-0.5) -- (0.75,0.5);
\draw[thick] (0.75,-0.5) -- (0,0);
\draw[thick] (0.75,0.5) -- (0,0);\draw(0.45,0)node{$\ph$};
\draw[thick] (4,-0.5) rectangle (5,0.5);\draw (4.5,0) node {$\Theta$};
\draw[thick,decorate,decoration={snake}] (5,0) -- (6.2,0);
\draw[thick, line width=1.5pt](7,-0.1) -- (7.4,-0.1);  
\draw[thick,line width=1.5pt](7,0.1) -- (7.4,0.1);  
\draw[thick] (9,-0.5) -- (9,0.5);
\draw[thick] (9,-0.5) -- (8.25,0);
\draw[thick] (9,0.5) -- (8.25,0);\draw(8.7,0)node{$\ph$};
\draw[thick,decorate,decoration={snake}] (9,0) -- (10.5,0);
\end{tikzpicture}
\end{center}

\smallskip\noindent
It says that $\ph$ can pass through the ``entanglement breaking channel'' $N\circ\Theta$.

\noindent
Equation (\ref{EqBroad}) can be written as
\begin{equation}\label{EqBroadOp}
C=N\circ C_{\F(\Omega_k)}\circ(\Theta\ten\Theta)\;,
\end{equation}
in a picture:

\smallskip
\begin{center}
\begin{tikzpicture}
\draw[thick,decorate,decoration={snake}] (0,0) -- (1.5,0);
\draw[thick] (1.5,-0.5) rectangle (2.5,0.5);\draw (2,0) node {$C$};
\draw[thick,decorate,decoration={snake}] (2.5,.333) -- (4,.333);
\draw[thick,decorate,decoration={snake}] (2.5,-0.333) -- (4,-0.333);
\draw[thick, line width=1.5pt](5,-0.1) -- (5.4,-0.1);  
\draw[thick,line width=1.5pt](5,0.1) -- (5.4,0.1);  
\draw[thick,decorate,decoration={snake}] (6,0) -- (7.5,0);
\draw[thick] (8.5,0) -- (9.5,0);
\draw[thick] (7.5,-0.5) rectangle (8.5,0.5);\draw (8,0) node {$N$};
\draw[thick] (9.5,0.5) -- (11,0.5) ;
\draw[thick] (9.5,-0.5) -- (11,-0.5) ;
\draw[thick] (9.5,0.5) -- (9.5,-0.5) ;
\fill[color=black] (9.5,0) circle [radius=0.1];
\draw[thick] (11,0.2) rectangle (11.6,0.8);\draw (11.3,0.5) node {$\Theta$};
\draw[thick] (11,-0.2) rectangle (11.6,-0.8);\draw (11.3,-0.5) node {$\Theta$};
\draw[thick,decorate,decoration={snake}] (11.6,0.5) -- (13,0.5);
\draw[thick,decorate,decoration={snake}] (11.6,-0.5) -- (13,-0.5);
\end{tikzpicture}
\end{center}

\smallskip\noindent
where the straight lines stand for the commutative algebra $\F(\Omega_k)$ and the `fork' denotes its classical broadcaster described in Section \ref{ClassCop}.
\noindent
Theorem \ref{ThmNoBroadcasting} says that the states in $S$ can be broadcast by the same quantum operation $C$ if and only if
they can all pass through a channel of the form (\ref{EqKluunOp}) (first picture).
If so, then a broadcaster for them is given by
equation (\ref{EqBroadOp}) and the second picture.

\subsection{Copying Pure States}
According to our No-Broadcasting Theorem \ref{ThmNoBroadcasting}, 
copying quantum states becomes possible as soon as the state to be reproduced is known
to lie in a set $S$ of states whose density matrices commute with each other.
For pure states $\psi_1$ and $\psi_2$ this means that $\ketbra{\psi_1}\cdot\ketbra{\psi_2}=\ketbra{\psi_2}\cdot\ketbra{\psi_1}$, i.e.
$\psi_1\parallel\psi_2$ or $\psi_1\perp\psi_2$.
The standard textbook formulation of this fact is called the ``No-Cloning Theorem''.
Since the pure state case figures in all textbooks, and has a much shorter proof \citep{Yuen},
we treat it separately below.

\begin{theorem}[\bf No-Cloning]\label{ThmNoCloning}
Let $S$ be a set of pure states on $M_d$.
Then the states in $S$ possess  a common cloner if and only if the associated state vectors are orthogonal.
\end{theorem}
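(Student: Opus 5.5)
The plan is to prove Theorem \ref{ThmNoCloning} directly via Stinespring (Proposition \ref{PropStinespring}) rather than through the no-broadcasting machinery. Throughout, ``orthogonal'' is read as ``pairwise orthogonal or parallel''; parallel vectors define the same state, so for distinct states in $S$ this means orthogonal. For each $\ph\in S$ I fix a unit vector $\psi_\ph$ with $\ph(a)=\inp{\psi_\ph}{a\psi_\ph}$, as in Example \ref{ExPureSyst}.

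\emph{Sufficiency.} Suppose the vectors $\psi_\ph$, $\ph\in S$, are pairwise orthogonal. Then there are at most $d$ of them; call them $\psi_1,\dots,\psi_k$, with projections $p_i=\ketbra{\psi_i}$ and states $\th_i=\ph_i$. Put $p_0:=\one-\sok i p_i$ and choose any state $\th_0$. Following (\ref{EqBroad}), I define
$$C(a\ten b):=\sum_{i=0}^k \th_i(a)\th_i(b)\,p_i\;.$$
This map is unital. It is completely positive, because it is a sum of products of states (which are completely positive) with the positive matrices $p_i$. Finally, $\ph_j\circ C(a\ten b)=\th_j(a)\th_j(b)$, so $C^*\ph_j=\ph_j\ten\ph_j$ and $C$ clones every state in $S$.

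\emph{Necessity.} Let $C:M_d\ten M_d\to M_d$ clone every state in $S$.
\begin{itemize}
\item[1.] By Proposition \ref{PropStinespring}, write $C(x)=v^*(x\ten\one_{M_k})v$ with an isometry $v:\CC^d\to\CC^d\ten\CC^d\ten\CC^k$.
\item[2.] For $\ph\in S$, the functional $x\mapsto\ph\circ C(x)$ is the restriction to $M_d\ten M_d$ of the vector state of $v\psi_\ph$. By hypothesis this restriction equals $\ph\ten\ph$, which is the pure state given by $\psi_\ph\ten\psi_\ph$.
\item[3.] Since the reduced state on the first two factors is pure, the unit vector $v\psi_\ph$ must be a product:
$$v\psi_\ph=\psi_\ph\ten\psi_\ph\ten\chi_\ph$$
for some unit vector $\chi_\ph\in\CC^k$. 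This is the step I expect to be the main obstacle. I would prove it with a Schmidt decomposition of $v\psi_\ph$ relative to the splitting $(\CC^d\ten\CC^d)\ten\CC^k$: the reduced density matrix is $\sum_j\l_j\ketbra{e_j}$, and this is a rank-one projection only if a single Schmidt term occurs. Any overall phase can be absorbed into $\chi_\ph$.
\item[4.] For $\ph,\ph'\in S$, set $z:=\inp{\psi_\ph}{\psi_{\ph'}}$. Because $v$ is isometric,
$$z=\inp{v\psi_\ph}{v\psi_{\ph'}}=z^2\,\inp{\chi_\ph}{\chi_{\ph'}}\;.$$
Taking absolute values gives $|z|\le|z|^2$, since $|\inp{\chi_\ph}{\chi_{\ph'}}|\le1$ by Cauchy--Schwarz.
\item[5.] Because $0\le|z|\le1$, the inequality $|z|\le|z|^2$ forces $z=0$ or $|z|=1$. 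By the equality case of Cauchy--Schwarz, $|z|=1$ means $\psi_\ph\parallel\psi_{\ph'}$, i.e.\ $\ph=\ph'$. Hence distinct states in $S$ have orthogonal vectors.
\end{itemize}

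As a cross-check on necessity, Lemma \ref{LemPureCloner} says a common cloner is a common broadcaster. Theorem \ref{ThmNoBroadcasting} then makes the rank-one projections $\ketbra{\psi_\ph}$ commute, which for rank-one projections happens exactly when the vectors are parallel or orthogonal.
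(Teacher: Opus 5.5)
Your proof is correct, and the necessity half follows the paper's argument: Stinespring, the product form of $v\psi$, then $\inp\psi\th=\inp\psi\th^2\inp\a\b$. There are two differences.

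First, the paper handles the step you flag as the main obstacle without a Schmidt decomposition. Evaluating the cloning identity on the single element $p\ten p$, with $p=\ketbra\psi$, gives $\inp{v\psi}{(p\ten p\ten\one_{M_k})v\psi}=\ph(p)^2=1$. Since $P=p\ten p\ten\one_{M_k}$ is a projection and $v\psi$ is a unit vector, $\norm{(\one-P)v\psi}^2=0$. Hence $v\psi$ lies in the range $\psi\ten\psi\ten\CC^k$ of $P$. Your Schmidt argument is valid, but it uses the full equality $C^*\ph=\ph\ten\ph$ where one value of it suffices.

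Second, for sufficiency the paper extends the vectors to an orthonormal basis $\e_1,\dots,\e_d$ and takes $C(x)=v^*xv$ with the isometry $v\e_i=\e_i\ten\e_i$. Unitality and complete positivity are then read off from the Stinespring form, and the copier is a single isometry with no ancilla. It keeps coherences: superpositions are mapped to entangled vectors. Your measure-and-prepare map, modelled on (\ref{EqBroad}), needs the filler term $p_0,\th_0$ and a short separate complete-positivity check. In exchange, it exhibits the cloner in the factorized form (\ref{EqBroadOp}) through the classical fork, tying the pure-state result directly to the no-broadcasting construction.
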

 
\begin{proof}
Sufficiency:
Extend the set of state vectors of states in $S$ to an orthonormal basis $\e_1,\ldots,\e_d$ of $\H:=\CC^d$,
and define an isometry $v:\H\to\H\ten\H$ by $v\e_i:=\e_i\ten\e_i$.
Then
   $$C:M_d\ten M_d\to M_d:\quad a\ten b\mapsto v^*(a\ten b)v$$
is a common cloner for all states in $S$, since, for $\ph=\inp{\e_j}{\cdot\e_j}$,
   $$\ph\circ C(a\ten b)=\inp{\e_j}{v^*(a\ten b)v\e_j}=\inp{\e_j\ten\e_j}{(a\ten b)\e_j\ten\e_j}=\ph\ten\ph(a\ten b)\;.$$
To prove necessity, suppose that $C$ is a cloner on $M_d$ for $S$.
By Stinespring's Theorem (Proposition \ref{PropStinespring}) there exist $k\in\NN$  and
an isometry $v:\H\to\H\ten\H\ten\CC^k$ such that
\begin{equation}\label{EqAppStine}
C(a\ten b)=v^*(a\ten b\ten\one_{M_k})v\;.
\end{equation}
For $\ph=\inp{\psi}{\cdot\psi}\in S$ we have, putting $p:=\ketbra{\psi}$:
\begin{eqnarray*}
1&=&\ph(p)^2=\ph\ten\ph(p\ten p)=\ph\circ C(p\ten p)\\
&=&\ph\bigl(v^*(p\ten p\ten\one_{M_k})v\bigr)=\inp{v\psi}{(p\ten p\ten\one_{M_k})v\psi}\;.
\end{eqnarray*}
It follows that $v\psi\in\psi\ten\psi\ten\CC^k$, say $v\psi=\psi\ten\psi\ten\a$ for some unit vector $\a\in\CC^k$.
Then, if $\inp{\th}{\cdot\th}$ is a another state in $S$, say with $v\th=\th\ten\th\ten\b$, we have
   $$\inp\psi\th=\inp{v\psi}{v\th}=\inp{\psi\ten\psi\ten\a}{\th\ten\th\ten\b}=\inp{\psi}{\th}^2\cdot\inp\a\b\;.$$
So either $\inp\psi\th=0$ or $\inp\psi\th\cdot\inp\a\b=1$.
As the states $\inp\psi{\cdot\psi}$ and $\inp\th{\cdot\th}$ are different, we have $|\inp\psi\th|<1$.
But $|\inp\a\b|\le1$. Therefore $\psi\perp\th$. 
\end{proof}

\noindent
{\bf Remark.}
The usual proofs of the ``no-cloning'' principle content themselves with the argument that there is no unitary $u$ with
    $$u:\quad\psi\ten\e\mapsto\psi\ten\psi$$
for $\psi\in\{\psi_1,\psi_2\}\subset\CC^d$ and a fixed ``register'' vector $\e\in\CC^d$,  
unless $\psi_1$ and $\psi_2$ are parallel or orthogonal.
It leaves one wondering why only operations on such small, $d^2$-dimensional, closed systems are considered as potential copiers.
Even \cite{NielsenChuang} (p. 532) yield to the temptation of simplicity,
although they admit that more sophisticated arguments are needed and can be given.
The importance of complete positivity as a minimal requirement was first realized by \cite{BarnumEtAl1},
M. Caves, C. Fuchs, R. Jozsa, and B. Shumacher. 


\bibliography{parthabib}

\end{document}